\theoremstyle{definition}
\newtheorem{defn}{Definition}[section]
\newtheorem{thm}[defn]{Theorem}
\newtheorem{lemma}[defn]{Lemma}
\newtheorem{tvr}[defn]{Proposition}
\newtheorem{cor}[defn]{Corollary}
\theoremstyle{remark}
\newtheorem{example}{Example}[section]
\newcommand{\der}{\operatorname{der}}
\newcommand{\Span}{\operatorname{span}}
\newcommand{\inv}{\operatorname{inv}}
\newcommand{\codim}{\operatorname{codim}}
\newcommand{\mL}{\mathcal{L}}
\newcommand{\mD}{\mathcal{D}}
\newcommand{\CC}{\mathbb{C}}
\newcommand{\End}{\operatorname{End}}
\newcommand{\gl}{\operatorname{gl}}
\newcommand{\jor}{\operatorname{jor}}
\newcommand{\f}{\operatorname{\psi}}
\newcommand{\g}{\operatorname{\phi}}
\begin{document}

\title[On $(\alpha,\beta,\gamma)$--derivations of Lie
Algebras and Invariant Function]{On
$(\alpha,\beta,\gamma)$--derivations of Lie Algebras and
Corresponding Invariant Functions}

\author{Petr Novotn\'y, Ji\v{r}\'\i\ Hrivn\'ak}%

\begin{center}\Large\bf
On
$(\alpha,\beta,\gamma)$--derivations of Lie Algebras and
Corresponding Invariant Functions
\end{center}
\bigskip
\begin{center}\large
Petr Novotn\'y, Ji\v{r}\'\i\ Hrivn\'ak\footnote{Corresponding author: Tel.: +420 2 24358351; fax: +420 2 22320861 \newline {\it E-mail address:} jiri.hrivnak@fjfi.cvut.cz (J. Hrivn\'ak)}
\end{center}
\bigskip
\begin{center}
Department of Physics,
Faculty of Nuclear sciences and Physical Engineering, Czech
Technical University, B\v{r}ehov\'a 7, 115 19 Prague 1, Czech
Republic
\end{center}
\vspace{24pt}
\hrule\vspace{8pt}
{\footnotesize
ABSTRACT. We consider finite--dimensional complex Lie algebras. We generalize
the concept of Lie derivations via certain complex parameters and
obtain various Lie and Jordan operator algebras as well as two
one--parametric sets of linear operators. Using these parametric
sets, we introduce complex functions with fundamental property --
invariance under Lie isomorphisms. One of these basis--independent
functions represents a complete set of invariant(s) for
three--dimensional Lie algebras. We present also its application on
physically motivated examples in dimension eight.
}\vspace{8pt}
\hrule
\vspace{24pt}

\section{Introduction}
The theory of finite dimensional complex Lie algebras is an
important part of Lie theory. It has several applications to physics
and connections to other parts of mathematics. With an increasing
amount of theory and applications concerning Lie algebras of various
dimensions, it is becoming necessary to ascertain applicable tools
for handling them. The miscellaneous characteristics of Lie algebras
constitute such tools and have also found applications: {\it Casimir
operators} \cite{inv}, {\it derived}, {\it lower central} and {\it
upper central sequences}, {\it Lie algebra of derivations}, {\it
radical}, {\it nilradical}, {\it ideals}, {\it subalgebras}
\cite{Jacobson,ide} and recently {\it megaideals} \cite{Pop1}. These
characteristics are particularly crucial when considering possible
affinities among Lie algebras.

Physically motivated relations between two Lie algebras, namely {\it
contractions} and {\it deformations}, have been extensively studied,
see e.g. \cite{G,PdeM}. When investigating these kinds of relations
in dimensions higher than five, one can encounter insurmountable
difficulties. Firstly, aside the {\it semisimple} ones, Lie algebras
are completely classified only up to dimension 5 and the {\it
nilpotent} ones up to dimension 6. In higher dimensions, only
special types, such as {\it rigid} Lie algebras \cite{Goze} or Lie
algebras with fixed structure of nilradical, are only classified
\cite{SW} (for detailed survey of classification results in lower
dimensions see e.g. \cite{Pop1} and references therein). Secondly,
if all available characteristics of two results of
contraction/deformation are the same then one cannot distinguish
them at all. This often occurs when the result of a contraction is
one--parametric or more--parametric class of Lie algebras.

The aim of this article is to partially overcome these obstacles and
to add new objects to the existing set of invariants. Number of
Casimir operators, dimensions of radical, nilradical, lower central
sequences, etc., are all invariant or --- equivalently --- basis
independent. However, in this article we pursue a different kind of
basis independent characteristics --- certain complex functions.
These invariant functions, which arise from the concept of so called
$(\alpha,\beta,\gamma)$--derivations, then represent a very powerful
tool for the description of Lie algebras, very effective and
essential when dealing with their parametric continuum.

\medskip
In Section \ref{gender}, we generalize the concept of
derivation of a Lie algebra; we introduce
$(\alpha,\beta,\gamma)$--derivations and show their pertinent
properties. All possible intersections of spaces containing these
derivations are investigated. Examples for low--dimensional Lie
algebras are presented.

In Section \ref{Invar}, we introduce two invariant functions
corresponding to $(\alpha,\beta,\gamma)$--derivations. We
demonstrate on all three--dimensional complex Lie algebras and on
physically motivated examples in dimension eight, how these
functions effectively enlarge the set of 'classical' invariants.

In Section 4, we shortly review other
generalizations of derivations, make a note on a computation of
$(\alpha,\beta,\gamma)$--derivations and further comments.


\section{$(\alpha,\beta,\gamma)$--derivations}\label{gender}

In this article let $\mL$ denote the finite--dimensional Lie algebra
over the field of complex numbers $\CC$ and $\End(\mL)$ the
associative algebra of all linear operators on the vector space
$\mL$. The space $\End(\mL)$, endowed with standard Lie commutator
$[A,B]=AB-BA$, is denoted as usual by $\gl (\mL)$ and the space
$\End(\mL)$, endowed with Jordan product $A\circ
B=\frac{1}{2}(AB+BA)$, is denoted by $\jor (\mL)$. In this way, any
subalgebra of $\End(\mL)$ forms also a subalgebra of $\gl (\mL)$ and
$\jor (\mL)$. We also adopt notation for the {\it center} $C(\mL)$
and for the {\it derived algebra} $\mL^2 = [\mL,\mL]$.

\subsection{Properties and structure of $(\alpha,\beta,\gamma)$--derivations
}\label{genderDP}

Recall that a {\it derivation} of $\mL$ is a linear operator $A
\in \End (\mL)$ such that for all $x,y\in \mL$
\begin{equation}\label{der}
   A[x,y] = [A x,y] + [x,A y]
\end{equation}
and the set of all these derivations, denoted by $\der(\mL)$, forms
a Lie algebra of derivations. Several non--equivalent ways
generalizing this definition have been recently studied
\cite{Bresar,Leger,Hartwig}. However, we will bring forward another
type of generalization.

We call a linear operator $A\in \End(\mL)$ an
\textbf{$(\alpha,\beta,\gamma)$--derivation} of $\mL$ if there exist
$\alpha,\beta,\gamma\in \CC$ such that for all $x,y\in \mL$
\begin{equation}\label{gd}
\alpha A [x,y] = \beta [Ax,y] + \gamma [x,Ay].
\end{equation}
For given $\alpha, \beta, \gamma \in \CC$ we denote the set of all
$(\alpha,\beta,\gamma)$--derivations as
$\mD(\alpha,\beta,\gamma)$, i.e.
\begin{equation}
\mD(\alpha,\beta,\gamma) = \{ A \in \End(\mL)\ |\ \alpha A [x,y] =
\beta [Ax,y] + \gamma [x,Ay],\ \ \forall x,y \in \mL\}.
\end{equation}
Let us focus on this set and show some its properties. It is clear
that $\mD(\alpha,\beta,\gamma)$ is a linear subspace of $\End(\mL)$
and it follows immediately from (\ref{gd}) that for any $\varepsilon
\in \CC \backslash\{0\}$ it holds
\begin{equation}\label{vla1}
\mD(\alpha,\beta,\gamma) =
\mD(\alpha\varepsilon,\beta\varepsilon,\gamma\varepsilon) =
\mD(\alpha,\gamma,\beta).
\end{equation}
Furthermore, we have the following important property:
\begin{lemma} For any $\alpha, \beta, \gamma \in \CC$
\begin{equation}\label{vla2}
\mD(\alpha,\beta,\gamma) = \mD(0,\beta-\gamma,\gamma -\beta) \cap
\mD(2\alpha,\beta+\gamma,\beta+\gamma).
\end{equation}
\end{lemma}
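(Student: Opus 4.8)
The plan is to prove the set equality by double inclusion, exploiting the two identities collected in~\eqref{vla1}, namely the scaling invariance $\mD(\alpha,\beta,\gamma)=\mD(\alpha\varepsilon,\beta\varepsilon,\gamma\varepsilon)$ and the swap symmetry $\mD(\alpha,\beta,\gamma)=\mD(\alpha,\gamma,\beta)$, together with the obvious fact that $\mD$ is additive in its parameter triple in the following sense: if $A\in\mD(\alpha_1,\beta_1,\gamma_1)\cap\mD(\alpha_2,\beta_2,\gamma_2)$, then $A\in\mD(\alpha_1+\alpha_2,\beta_1+\beta_2,\gamma_1+\gamma_2)$, since the defining relation~\eqref{gd} is linear in $(\alpha,\beta,\gamma)$ for a fixed operator $A$.

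First I would prove the inclusion ``$\subseteq$''. Take $A\in\mD(\alpha,\beta,\gamma)$, so $\alpha A[x,y]=\beta[Ax,y]+\gamma[x,Ay]$ for all $x,y$. Applying the swap symmetry from~\eqref{vla1} gives also $\alpha A[x,y]=\gamma[Ax,y]+\beta[x,Ay]$. Subtracting these two relations yields $0=(\beta-\gamma)[Ax,y]+(\gamma-\beta)[x,Ay]$, which is exactly the statement $A\in\mD(0,\beta-\gamma,\gamma-\beta)$. Adding the two relations yields $2\alpha A[x,y]=(\beta+\gamma)[Ax,y]+(\beta+\gamma)[x,Ay]$, i.e.\ $A\in\mD(2\alpha,\beta+\gamma,\beta+\gamma)$. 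Hence $A$ lies in the intersection on the right-hand side.

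For the reverse inclusion ``$\supseteq$'', suppose $A$ lies in both $\mD(0,\beta-\gamma,\gamma-\beta)$ and $\mD(2\alpha,\beta+\gamma,\beta+\gamma)$. Writing out the two defining relations and adding them, the additivity remark gives $A\in\mD(2\alpha,2\beta,2\gamma)$; then the scaling invariance with $\varepsilon=\tfrac12$ gives $A\in\mD(\alpha,\beta,\gamma)$, as desired. I expect no serious obstacle here: the argument is an elementary manipulation of the linear relation~\eqref{gd}. The only point requiring a little care is the degenerate case $\beta=\gamma$, where the first factor on the right-hand side becomes $\mD(0,0,0)=\End(\mL)$ and the claimed identity reduces to the tautology $\mD(\alpha,\beta,\beta)=\mD(2\alpha,2\beta,2\beta)$, which is again immediate from~\eqref{vla1}; one should mention that the proof above covers this case uniformly, since the subtraction step simply produces the trivial relation $0=0$.
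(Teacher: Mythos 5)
Your proof is correct and follows essentially the same route as the paper's: the paper writes the defining relation for $[x,y]$ and for $[y,x]$ and then adds and subtracts, which is exactly your subtraction/addition step (your appeal to the swap symmetry in~\eqref{vla1} is just that same substitution $x\leftrightarrow y$ packaged as an identity), and the paper's terse ``similarly, starting with~\eqref{trik2} we obtain~\eqref{trik1}'' is precisely your additivity-plus-rescaling argument for the reverse inclusion. Your explicit treatment of the reverse direction and the remark on the degenerate case $\beta=\gamma$ are slightly more detailed than the paper, but there is no substantive difference in method.
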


\begin{proof} Suppose any $\alpha, \beta, \gamma \in \CC$ are given. Then for $A \in \mD(\alpha,\beta,\gamma)$
 and arbitrary $x,y\in \mL$ we have
    \begin{eqnarray}
    \alpha A[x,y] & = & \beta[Ax,y] + \gamma [x,Ay] \label{trik1} \\
    \alpha A[y,x] & = & \beta[Ay,x] + \gamma [y,Ax]. \nonumber
    \end{eqnarray}
    By summing and subtracting equations (\ref{trik1}) we obtain
     \begin{eqnarray}
    0 & = & (\beta -\gamma)([Ax,y] - [x,Ay]) \label{trik2} \\
    2\alpha A[x,y] & = & (\beta + \gamma)([Ax,y] + [x,Ay]) \nonumber
     \end{eqnarray}
    and thus $\mD(\alpha,\beta,\gamma) \subset \mD(0,\beta-\gamma,\gamma -\beta) \cap
\mD(2\alpha,\beta+\gamma,\beta+\gamma)$. Similarly, starting with
equations (\ref{trik2}) we obtain equations (\ref{trik1}) and the
remaining inclusion is proven.
\end{proof}
Further, we proceed to formulate the theorem which reveals the
structure of the spaces $\mD(\alpha,\beta,\gamma)$; the three
original parameters are in fact reduced to only one.
\begin{thm}\label{klass}
For any $\alpha, \beta, \gamma \in \CC$ there exists $\delta\in
\CC$ such that the subspace $\mD(\alpha,\beta,\gamma) \subset \End
(\mL)$ is equal to some of the four following subspaces:
\begin{enumerate}
 \item $\mD(\delta,0,0)$
 \item $\mD(\delta,1,-1)$
 \item $\mD(\delta,1,0)$
 \item $\mD(\delta,1,1)$.
\end{enumerate}
\end{thm}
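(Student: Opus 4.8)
The plan is to run a case analysis according to the relationship between the parameters $\beta$ and $\gamma$, using only the two structural identities already at hand: the rescaling/flip symmetry $(\ref{vla1})$ and the intersection decomposition $(\ref{vla2})$. Heuristically, $(\ref{vla1})$ is used to normalize a nonzero coefficient to $1$, while $(\ref{vla2})$ is used to split an arbitrary space into a factor of ``$(0,1,-1)$-type'' and a factor of ``$(2\alpha,\beta+\gamma,\beta+\gamma)$-type''; together these force every triple into one of the four normal forms.

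First I would handle the cases in which $\beta\in\{\gamma,-\gamma\}$. If $\beta=\gamma=0$, then $\mD(\alpha,\beta,\gamma)=\mD(\alpha,0,0)$ is already of form (1) with $\delta=\alpha$. If $\beta=\gamma\neq0$, then applying $(\ref{vla1})$ with $\varepsilon=\beta^{-1}$ gives $\mD(\alpha,\beta,\gamma)=\mD(\alpha\beta^{-1},1,1)$, which is of form (4). If $\beta=-\gamma\neq0$, then $(\ref{vla1})$ with $\varepsilon=\beta^{-1}$ gives $\mD(\alpha,\beta,\gamma)=\mD(\alpha\beta^{-1},1,-1)$, which is of form (2). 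This exhausts all triples with $\beta=\pm\gamma$.

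It remains to treat the generic case $\beta\neq\pm\gamma$, where both $\beta-\gamma$ and $\beta+\gamma$ are nonzero. Here I apply $(\ref{vla2})$ to write $\mD(\alpha,\beta,\gamma)=\mD(0,\beta-\gamma,\gamma-\beta)\cap\mD(2\alpha,\beta+\gamma,\beta+\gamma)$, and then rescale each factor via $(\ref{vla1})$ --- dividing the first triple by $\beta-\gamma$ and the second by $\beta+\gamma$ --- to obtain $\mD(\alpha,\beta,\gamma)=\mD(0,1,-1)\cap\mD(\lambda,1,1)$ with $\lambda=2\alpha(\beta+\gamma)^{-1}$. Finally I identify this intersection with a space of form (3): put $\delta=\tfrac{1}{2}\lambda=\alpha(\beta+\gamma)^{-1}$ and apply $(\ref{vla2})$ to the triple $(\delta,1,0)$, for which $\beta-\gamma=\beta+\gamma=1$; this yields $\mD(\delta,1,0)=\mD(0,1,-1)\cap\mD(2\delta,1,1)=\mD(0,1,-1)\cap\mD(\lambda,1,1)=\mD(\alpha,\beta,\gamma)$, completing the argument.

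I do not expect a serious obstacle: the proof is essentially bookkeeping once $(\ref{vla1})$ and $(\ref{vla2})$ are available, and all divisions performed (by $\beta$, by $\beta\pm\gamma$, and by $2$) are legitimate over $\CC$. The only step that requires a moment's thought is the last one, namely observing that the form-(3) spaces are themselves decomposable, by $(\ref{vla2})$ applied with parameters $(\delta,1,0)$, into the intersection of a form-(2) space and a form-(4) space; it is precisely this observation that makes the generic intersection collapse back onto the given list instead of producing a fifth type.
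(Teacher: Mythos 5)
Your proof is correct and follows essentially the same route as the paper: a case split on whether $\beta\pm\gamma$ vanish, normalization via (\ref{vla1}), and the decomposition (\ref{vla2}) applied both to the given triple and to $(\delta,1,0)$ to collapse the generic intersection onto type (3). The only (harmless) deviation is in the subcase $\beta=-\gamma\neq0$, where you rescale directly to get $\mD(\alpha\beta^{-1},1,-1)$ — simpler than the paper's double application of (\ref{vla2}), which instead lands on the equivalent normal form $\mD(\alpha,1,-1)$.
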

\begin{proof}
\begin{enumerate}
    \item Suppose $\beta + \gamma = 0 $. Then either $\beta = \gamma
    =0$ or $\beta = -\gamma \neq 0$.
    \begin{enumerate}
    \item For $\beta = \gamma = 0$ we
    have $$\mD(\alpha,\beta,\gamma)  =  \mD(\alpha,0,0).$$
    \item For $\beta = -\gamma \neq 0$ we have according to (\ref{vla1}),
    (\ref{vla2}):
 $$\mD(\alpha,\beta,\gamma) = \mD(0,\beta-\gamma,\gamma-\beta)\cap \mD(2\alpha,0,0)
    = \mD(0,1,-1)\cap \mD(\alpha,0,0).$$ On the other hand it holds, $$ \mD(\alpha,1,-1) = \mD(0,2,-2)\cap \mD(2\alpha,0,0)
    = \mD(0,1,-1)\cap \mD(\alpha,0,0)$$ and therefore
    $$\mD(\alpha,\beta,\gamma)= \mD(\alpha,1,-1).$$
    \end{enumerate}
    \item Suppose $\beta + \gamma \neq 0 $. Then either $\beta - \gamma \neq
    0$ or $\beta = \gamma \neq 0 $.
    \begin{enumerate}
        \item For $\beta - \gamma \neq 0$ we have
    $$\mD(\alpha,\beta,\gamma) = \mD(0,\beta-\gamma,\gamma-\beta)\cap
    \mD(2\alpha,\beta+\gamma,\beta+\gamma) = \mD(0,1,-1)\cap
    \mD(\frac{2\alpha}{\beta+\gamma},1,1)$$
    and this is according to (\ref{vla2}) equal to
    $\mD(\frac{\alpha}{\beta+\gamma},1,0)$, i.e.
    $$\mD(\alpha,\beta,\gamma) =
    \mD(\frac{\alpha}{\beta+\gamma},1,0).$$
        \item For $\beta = \gamma \neq 0 $ we have
        $$\mD(\alpha,\beta,\gamma) = \mD(\frac{\alpha}{\beta},1,1).$$
    \end{enumerate}
\end{enumerate}
\end{proof}

Now we will discuss in detail the possible outcome of Theorem
\ref{klass} which depends on the value of the parameter $\delta\in
\CC$.
    \begin{enumerate}
        \item $\mD(\delta,0,0)$:
                \begin{enumerate}
                \item For $\delta = 0 $ we trivially get $\mD(0,0,0) =
                \End(\mL)$.
                \item For $\delta \neq 0$ the space $\mD(1,0,0)$ is an associative subalgebra of $\End (\mL)$, which maps derived
                algebra $\mL^2 = [\mL,\mL]$ to zero vector:
                $$\mD(1,0,0) = \{A \in \End(\mL)\ |\ A(\mL^2) =
                0\},$$
                and therefore its dimension is as follows:
                       $$\dim \mD(1,0,0) = \operatorname{codim}\mL^2 \dim
                       \mL.$$
                \end{enumerate}

        \item $\mD(\delta,1,-1)$:
                \begin{enumerate}
                \item For $\delta = 0$ we have a Jordan algebra
                $\mD(0,1,-1)\subset \jor (\mL)$,
                $$\mD(0,1,-1) = \{A \in \End(\mL)\ |\ [Ax,y] = [x,A y],\ \ \forall
                x,y\in\mL \}.$$
                \item For $\delta \neq 0$ we get Jordan algebra $\mD(1,1,-1)\subset \jor (\mL)$ as an intersection of two Jordan
                algebras:
                $$ \mD(\delta,1,-1) = \mD(0,1,-1) \cap
                \mD(\delta,0,0) = \mD(0,1,-1) \cap \mD(1,0,0)=\mD(1,1,-1) .$$
                \end{enumerate}

        \item $\mD(\delta,1,0)$:

                \begin{enumerate}
                \item For $\delta = 0$ we get an associative algebra of all linear operators
                of the vector space $\mL$, which maps the whole $\mL$ into its center $ C(\mL)$:
                $$ \mD(0,1,0) = \{A \in \End(\mL)\ |\ A(\mL) \subseteq
                C(\mL)\}, $$ and its dimension is
                        $$\dim \mD(0,1,0) = \dim \mL \, \dim C(\mL).$$

                \item For $\delta = 1$ the space $\mD(1,1,0)$ is the centralizer of adjoint representation ad($\mL$) in $\gl (\mL)$.

                \item For the remaining values of $\delta$ the space $\mD(\delta,1,0)$
                forms, in the general case of Lie algebra $\mL$,
                only the vector subspace of
                $\End(\mL)$. Thus, we have the one--parametric set of vector
                spaces:
                 $$ \mD(\delta,1,0) = \mD(0,1,-1) \cap
                \mD(2\delta,1,1). $$
                \end{enumerate}

        \item $\mD(\delta,1,1)$:
                \begin{enumerate}
                \item For $\delta = 0$ we have a Lie algebra
                $$\mD(0,1,1) = \{A \in \End(\mL)\ |\ [Ax,y] = -[x,A y],\ \ \forall x,y\in\mL \}.$$
                \item For $\delta = 1$ we get the algebra of derivations of
                $\mL$,
                    $$ \mD(1,1,1) = \der (\mL).$$
                \item For the remaining values of $\delta$ the space $\mD(\delta,1,1)$
                forms, in the general case of Lie algebra $\mL$,
                 only the vector subspace of
                $\End(\mL)$.
                \end{enumerate}
    \end{enumerate}

\subsection{Intersections of the spaces $\mD(\alpha,\beta,\gamma)$
}\label{intersections}

Various intersections of two different subspaces
$\mD(\alpha,\beta,\gamma)$ also turned out to be of interest;
therefore we systematically explore all possible intersections of
these spaces. However, all intersections of these spaces lead us
to only two new structures. The first one
\begin{equation}\label{inter1}
\begin{array}[t]{lll}
    \mD(1,0,0) \cap \mD(0,1,0) & = & \mD(1,0,0) \cap \mD(\delta,1,0) = \mD(1,1,-1) \cap
    \mD(\delta,1,0)\\
           & = & \mD(1,1,-1) \cap \mD(\delta,1,1) = \mD(\delta,1,0) \cap \mD(\gamma,1,0)_{\delta \neq
           \gamma}\\
           & = & \mD(\delta,1,0) \cap \mD(\gamma,1,1)_{2\delta \neq
           \gamma}\\
    \end{array}
\end{equation}
forms an associative algebra and is contained in all spaces
$\mD(\alpha,\beta,\gamma)$. Its dimension is

\begin{equation}\label{inter2}
\dim(\mD(1,0,0)\cap\mD(0,1,0)) = \operatorname{codim} \mL^2
    \dim C(\mL).
\end{equation}

The second one is a new Lie algebra:
\begin{equation}\label{prun} \mD(1,0,0) \cap \mD(0,1,1) =
\mD(1,0,0) \cap \mD(\delta,1,1) = \mD(\delta,1,1) \cap
\mD(\gamma,1,1)_{\delta \neq
    \gamma}.\end{equation}
Other intersections lead to structures, which we already have:
    $$\mD(1,1,-1) = \mD(1,0,0) \cap \mD(0,1,-1)=\mD(1,0,0)\cap \mD(1,1,-1)=\mD(0,1,-1)\cap \mD(1,1,-1)$$
\begin{equation}\label{inter3}
\mD(\delta,1,0) = \mD(0,1,-1) \cap \mD(\delta,1,0) = \mD(0,1,-1) \cap
    \mD(2\delta,1,1)=\mD(\delta,1,0)\cap \mD(2\delta,1,1).
\end{equation}
For completeness we state that the space $\mD(1,0,0) \cap
\mD(0,1,0)$ forms an ideal in $\mD(1,1,0),\ \mD(1,1,1)$ and in $
\mD(1,1,-1)$; the space $\mD(1,0,0) \cap \mD(0,1,1) $ is an ideal
in $\mD(1,1,1)$; the space $\mD(0,1,0)$ forms an ideal in
$\mD(0,1,1)$ and in $ \mD(0,1,-1)$.

The structure of algebras $\mD(1,0,0)$, $\mD(0,1,0)$ and their
intersection depends only on the dimensions of the center and the
derived algebra of $\mL$. If $\mL$ and $\widetilde{\mL}$ are the Lie
algebras of the same dimension then the same dimensions of their
centers imply $\mD(0,1,0) \cong \widetilde{\mD}(0,1,0)$; the
coinciding dimension of the derived algebra implies $\mD(1,0,0)
\cong \widetilde{\mD}(1,0,0).$ Moreover, if $\mL$ and
$\widetilde{\mL}$ are both indecomposable and dimensions of the
centers coincide, as well as the dimensions of derived algebras,
then it holds:
\begin{equation}\label{prunnn}
\mD(0,1,0) \cap \mD(1,0,0) \cong \widetilde{\mD}(0,1,0) \cap
\widetilde{\mD}(1,0,0) .\end{equation}

\subsection{Examples of $(\alpha,\beta,\gamma)$--derivations
}\label{exder} We present as illustration some examples of
$(\alpha,\beta,\gamma)$--derivations for lower--dimensional Lie
algebras. Note especially the form of the one--parametric subspace
$\mD(\delta,1,1)$, as it is significant later on.

\begin{example}

    Consider a two--dimensional Lie algebra $\mL_2$ with a basis
    $\{e_1,e_2\}$ and its only non--zero commutation relation: $[e_1,e_2]=e_2$.
    \begin{enumerate}[-] \itemsep 6pt
        \item $\mD(1,1,1) = \Span_{\CC}
 {\left\{
\left(%
\begin{smallmatrix}
  0 & 0 \\
  1 & 0 \\
\end{smallmatrix}%
\right),
\left(%
\begin{smallmatrix}
  0 & 0 \\
  0 & 1 \\
\end{smallmatrix}%
\right) \right\}} \cong \mL_2$


    \item $\mD(0,1,1) = \Span_{\CC}
{\left\{
\left(%
\begin{smallmatrix}
  0 & 1 \\
  0 & 0 \\
\end{smallmatrix}%
\right),
\left(%
\begin{smallmatrix}
  0 & 0 \\
  1 & 0 \\
\end{smallmatrix}%
\right),
\left(%
\begin{smallmatrix}
  1 & 0 \\
  0 & -1 \\
\end{smallmatrix}%
\right) \right\}} \cong sl(2,\CC)$


    \item $\mD(1,1,0) = \mD(0,1,-1) = \Span_{\CC}{\left\{
\left(%
\begin{smallmatrix}
  1 & 0 \\
  0 & 1 \\
\end{smallmatrix}%
\right) \right\}}$

       \item $\mD(1,0,0) \cap \mD(0,1,1) = \Span_{\CC}{\left\{
\left(%
\begin{smallmatrix}
  0 & 0 \\
  1 & 0 \\
\end{smallmatrix}%
\right) \right\}}$

\item $\mD(\delta,1,0) = \{0\}$ for $\delta \neq 1$.

\item $\mD(\delta,1,1) = \Span_{\CC}
 {\left\{\left(%
\begin{smallmatrix}
  0 & 0 \\
  1 & 0 \\
\end{smallmatrix}%
\right),
\left(%
\begin{smallmatrix}
  \delta -1& 0 \\
  0 & 1 \\
\end{smallmatrix}%
\right) \right\}} $ for $\delta \neq 0$.

    \item $\mD(0,1,0)=\mD(1,0,0) \cap \mD(0,1,0) = \{0\}$

    \item $\mD(1,0,0) = \Span_{\CC}
\left\{
\left(%
\begin{smallmatrix}
  1 & 0 \\
  0 & 0 \\
\end{smallmatrix}%
\right),
\left(%
\begin{smallmatrix}
  0 & 0 \\
  1 & 0 \\
\end{smallmatrix}%
\right) \right\} \cong \mL_2$


    \item $\mD(1,1,-1) = \{0\}$

    \end{enumerate}
\end{example}

\begin{example} Consider a simple Lie algebra of the lowest dimension:
$sl(2,\CC)$.
        \begin{enumerate}[-] \itemsep 6pt
            \item $\mD(1,1,1) \cong sl(2,\CC)$
            \item $\mD(0,1,1) = \mD(0,1,0) = \mD(1,0,0) = \mD(1,1,-1) = \{0\}$
            \item $\mD(1,1,0) = \mD(0,1,-1) = \Span_{\CC}{\left\{\left(\begin{smallmatrix}
              1 & 0 & 0 \\
              0 & 1 & 0 \\
              0 & 0 & 1 \\
              \end{smallmatrix}\right)\right\}}$
            \item $\mD(\delta,1,0) = \{0\}$ for $\delta \neq 1$.
            \item $\mD(\delta,1,1) = \{0\}$ for $\delta \neq \pm
            1, 2$
            \item $\mD(2,1,1) = \Span_{\CC}{\left\{\left(\begin{smallmatrix}
              1 & 0 & 0 \\
              0 & 1 & 0 \\
              0 & 0 & 1 \\
              \end{smallmatrix}\right)\right\}}$
            \item $\mD(-1,1,1) =
            \Span_{\CC} {\left\{\left(
            \begin{smallmatrix}
              1 & 0 & 0 \\
              0 & -2 & 0 \\
              0 & 0 & 1 \\
              \end{smallmatrix} \right),
            \left( \begin{smallmatrix}
              0 & 1 & 0 \\
              0 & 0 & 2 \\
              0 & 0 & 0 \\
              \end{smallmatrix} \right),
              \left( \begin{smallmatrix}
              0 & 0 & 1 \\
              0 & 0 & 0 \\
              0 & 0 & 0 \\
              \end{smallmatrix} \right),
              \left( \begin{smallmatrix}
              0 & 0 & 0 \\
              2 & 0 & 0 \\
              0 & 1 & 0 \\
              \end{smallmatrix} \right),
              \left( \begin{smallmatrix}
              0 & 0 & 0 \\
              0 & 0 & 0 \\
              1 & 0 & 0 \\
              \end{smallmatrix} \right)
              \right\}} $
  \end{enumerate}
\end{example}


\section{Invariant functions}\label{Invar}

\subsection{Definition and properties of functions $\f$ and $\g$}

In this section we define complex functions with pertinent property
--- invariance under Lie isomorphisms. Suppose we have an arbitrary
non--singular linear mapping $\sigma$ and this mapping represents an
isomorphism between two Lie algebras, say $\mL$ and $
\widetilde{\mL}$. That means that for $\sigma:\mL \rightarrow
\widetilde{\mL}$ and for all $x,y\in \widetilde{\mL}$
$$[x,y]_{ \widetilde{\mL}}=\sigma [\sigma^{-1}x,\sigma^{-1}y
]_{\mL}. $$ By rewriting definition relation (\ref{gd}) we have for
$A\in \mD(\alpha,\beta,\gamma)$  $$\alpha A
[\sigma^{-1}x,\sigma^{-1}y]_{\mL} = \beta
[A\sigma^{-1}x,\sigma^{-1}y]_{\mL} + \gamma
[\sigma^{-1}x,A\sigma^{-1}y]_{\mL} .$$ Applying the mapping $\sigma$
on this equation and taking into account that $\alpha, \beta,
\gamma$ are in $ \CC$ then
\begin{equation}\label{invar}
\alpha \sigma A \sigma^{-1} [x,y]_{ \widetilde{\mL}} = \beta
[\sigma A \sigma^{-1}x,y]_{ \widetilde{\mL}} + \gamma [x,\sigma A
\sigma^{-1}y]_{ \widetilde{\mL}},
\end{equation}
i.e. $\sigma A\sigma^{-1}\in \widetilde{\mD}(\alpha,\beta,\gamma)$.
Thus, we easily arrive to the crucial result which we sum up as
follows:

\begin{tvr}\label{tvr1}
Let $\sigma:\mL \rightarrow \widetilde{\mL}$ be an isomorphism of
Lie algebras $\mL$ and $\widetilde{\mL}$. Then the mapping $\rho
:\End(\mL)\rightarrow \End(\widetilde{\mL})$ defined for all $A\in
\End(\mL)$ by $ \rho (A) = \sigma A \sigma^{-1} $ is an
isomorphism of associative algebras $\End(\mL)$ and $
\End(\widetilde{\mL})$. Moreover, for any $\alpha, \beta,
\gamma\in \CC$
 $$\rho(\mD(\alpha,\beta,\gamma)) =
\widetilde{\mD}(\alpha,\beta,\gamma)$$ holds.
\end{tvr}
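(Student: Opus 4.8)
The plan is to verify the two assertions of the proposition in sequence, since the second one about $\rho(\mD(\alpha,\beta,\gamma))$ is essentially already contained in the computation leading up to the statement, while the first one about $\rho$ being an isomorphism of associative algebras is a routine check. First I would observe that $\rho(A) = \sigma A \sigma^{-1}$ is well-defined as a linear operator on $\widetilde{\mL}$ because $\sigma:\mL\to\widetilde{\mL}$ is a linear bijection; linearity of $\rho$ in $A$ is immediate, and $\rho$ is bijective with inverse $B\mapsto \sigma^{-1} B \sigma$. That $\rho$ respects the associative product follows from $\rho(A)\rho(B) = \sigma A\sigma^{-1}\sigma B\sigma^{-1} = \sigma (AB)\sigma^{-1} = \rho(AB)$, and $\rho$ sends the identity of $\End(\mL)$ to the identity of $\End(\widetilde{\mL})$; hence $\rho$ is an isomorphism of associative algebras.

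Next I would establish the inclusion $\rho(\mD(\alpha,\beta,\gamma))\subseteq \widetilde{\mD}(\alpha,\beta,\gamma)$. This is exactly the displayed computation in equations \eqref{invar}: starting from $A\in\mD(\alpha,\beta,\gamma)$, substituting $\sigma^{-1}x,\sigma^{-1}y$ for $x,y$ in the defining relation \eqref{gd}, applying $\sigma$ to both sides, and using that $\sigma$ is a Lie algebra isomorphism (so $\sigma[u,v]_{\mL} = [\sigma u,\sigma v]_{\widetilde{\mL}}$) together with $\CC$-linearity to pull the scalars $\alpha,\beta,\gamma$ through $\sigma$, yields precisely $\sigma A\sigma^{-1}\in\widetilde{\mD}(\alpha,\beta,\gamma)$. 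The reverse inclusion follows by symmetry: $\sigma^{-1}:\widetilde{\mL}\to\mL$ is also a Lie algebra isomorphism, so the same argument applied to it gives $\rho^{-1}(\widetilde{\mD}(\alpha,\beta,\gamma))\subseteq \mD(\alpha,\beta,\gamma)$, i.e. $\widetilde{\mD}(\alpha,\beta,\gamma)\subseteq \rho(\mD(\alpha,\beta,\gamma))$. Combining the two inclusions gives the claimed equality $\rho(\mD(\alpha,\beta,\gamma)) = \widetilde{\mD}(\alpha,\beta,\gamma)$.

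There is essentially no hard part here; the proposition is a conjunction of a standard conjugation-is-an-algebra-isomorphism fact and a direct unravelling of the definition of $(\alpha,\beta,\gamma)$-derivation under change of basis. The only point that requires a moment's care is making sure the change of variables $x\mapsto\sigma^{-1}x$ is legitimate — it is, because $\sigma$ is surjective, so as $x$ ranges over $\widetilde{\mL}$ the element $\sigma^{-1}x$ ranges over all of $\mL$, and \eqref{gd} holds for all arguments in $\mL$. I would present the proof in two short paragraphs mirroring the two halves of the statement, referencing \eqref{gd} and \eqref{invar} rather than rederiving them, and invoking the symmetry argument to avoid repeating the computation for the reverse inclusion.
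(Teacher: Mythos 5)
Your proposal is correct and follows essentially the same route as the paper, whose ``proof'' is precisely the displayed computation \eqref{invar} preceding the proposition; the paper leaves the associative-algebra-isomorphism check and the reverse inclusion implicit, whereas you spell out the standard conjugation argument and obtain equality by applying the same computation to $\sigma^{-1}$. No gaps.
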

\begin{cor}
For any $\alpha, \beta, \gamma\in \CC$ the dimension of the vector
space $\mD(\alpha,\beta,\gamma)$ is an invariant of the Lie algebra
$\mL$.
\end{cor}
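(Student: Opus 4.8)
The plan is to derive the corollary directly from Proposition \ref{tvr1}. Since the dimension of a vector space is preserved under any vector space isomorphism, and since $\rho$ restricted to $\mD(\alpha,\beta,\gamma)$ is a linear isomorphism onto $\widetilde{\mD}(\alpha,\beta,\gamma)$, the two spaces must have the same dimension. So the argument is essentially a one-liner once the proposition is in hand.

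More concretely, first I would recall that by Proposition \ref{tvr1}, for any Lie isomorphism $\sigma:\mL\to\widetilde{\mL}$ the map $\rho(A)=\sigma A\sigma^{-1}$ is an (associative algebra, hence in particular linear) isomorphism $\End(\mL)\to\End(\widetilde{\mL})$ satisfying $\rho(\mD(\alpha,\beta,\gamma))=\widetilde{\mD}(\alpha,\beta,\gamma)$. Next I would note that the restriction $\rho|_{\mD(\alpha,\beta,\gamma)}$ is therefore a linear bijection from $\mD(\alpha,\beta,\gamma)$ onto $\widetilde{\mD}(\alpha,\beta,\gamma)$ (it is injective because $\rho$ is, and surjective onto that subspace by the displayed equality). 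Finally I would conclude $\dim\mD(\alpha,\beta,\gamma)=\dim\widetilde{\mD}(\alpha,\beta,\gamma)$, which is precisely the statement that this dimension is an invariant: isomorphic Lie algebras yield equal values.

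There is essentially no obstacle here — the only thing to be a little careful about is making explicit that "invariant of the Lie algebra $\mL$" means a quantity that depends only on the isomorphism class of $\mL$, so that the claim to prove is exactly the equality of dimensions across an arbitrary Lie isomorphism. Since $\sigma$ in Proposition \ref{tvr1} was allowed to be an arbitrary isomorphism between arbitrary Lie algebras $\mL$ and $\widetilde{\mL}$, no genericity or further hypotheses are needed. If one wanted to be pedantic one could also remark that finite-dimensionality guarantees all these dimensions are finite, but this is already standing in the paper's conventions.

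\begin{proof}
Let $\sigma:\mL\rightarrow\widetilde{\mL}$ be any isomorphism of Lie algebras. By Proposition \ref{tvr1}, the map $\rho:\End(\mL)\rightarrow\End(\widetilde{\mL})$, $\rho(A)=\sigma A\sigma^{-1}$, is an isomorphism of associative algebras, and in particular a linear isomorphism of the underlying vector spaces, with $\rho(\mD(\alpha,\beta,\gamma))=\widetilde{\mD}(\alpha,\beta,\gamma)$ for all $\alpha,\beta,\gamma\in\CC$. Hence the restriction of $\rho$ to $\mD(\alpha,\beta,\gamma)$ is an injective linear map onto $\widetilde{\mD}(\alpha,\beta,\gamma)$, so it is a linear isomorphism between these two subspaces. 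Consequently
$$\dim\mD(\alpha,\beta,\gamma)=\dim\widetilde{\mD}(\alpha,\beta,\gamma),$$
i.e. this dimension depends only on the isomorphism class of $\mL$ and is therefore an invariant of $\mL$.
\end{proof}
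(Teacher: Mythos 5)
Your proof is correct and matches the paper's intent exactly: the paper states this corollary as an immediate consequence of Proposition \ref{tvr1} without further argument, and your one-line deduction (the conjugation map restricts to a linear isomorphism between the two derivation spaces, hence equal dimensions) is precisely the reasoning being relied upon.
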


Indeed, the possibility of new invariants as dimensions of various
$\mD(\alpha,\beta,\gamma)$ is very promising; it seems timely to
list a summary of those spaces $\mD(\alpha,\beta,\gamma)$ whose
dimensions do not depend on dimensionality of well--known
substructures of $\mL$, such as center $C (\mL )$ and $\mL ^2$. The
outcome of Theorem \ref{klass} and the discussion below it,
relations (\ref{prun}), (\ref{prunnn}), yield:
\begin{enumerate}
    \item associative algebra $\mD(1,1,0) $ \label{ggg}
    \item Lie algebras $\mD(1,1,1),\ \mD(0,1,1),\
    \mD(1,0,0) \cap \mD(0,1,1)$
    \item Jordan algebras $\mD(1,1,-1),\ \mD(0,1,-1)$
    \item one--parametric sets of vector spaces $\mD(\alpha,1,0), \
    \mD(\alpha,1,1)$.
\end{enumerate}
Since the definition of $(\alpha,\beta,\gamma)$--derivations
partially overlaps other generalizations, some of these sets
naturally appeared already in the literature. Namely in
\cite{Leger}, a considerable amount of theory concerning relations
between $\mD(1,1,0) $ and $\mD(0,1,-1)$ has been developed (see also
Concluding remarks). In \cite{NH}, the usefulness of the invariant
dimensions of the Lie structures $\mD(1,1,1)$, $\mD(0,1,1)$,
$\mD(1,1,0) $ and $\mD(1,0,0) \cap \mD(0,1,1)$ as well as their
mutual independence has been shown. In this article we focus on
one--parametric sets of vector spaces $\mD(\alpha,1,0)$ and
$\mD(\alpha,1,1)$.

We use these one--parametric sets of vector spaces to define the
invariant function of a Lie algebra $\mL$. Functions $\f,\g:\CC
\longrightarrow \{0,1,2,\ldots,(\dim\mL)^2\}$ defined by formulas
\begin{equation}
\f(\alpha) = \dim\mD(\alpha,1,1)
\end{equation}
\begin{equation}
\g(\alpha) = \dim\mD(\alpha,1,0)
\end{equation}
are called {\bf invariant functions} corresponding to
$(\alpha,\beta,\gamma)$--derivations of a Lie algebra $\mL$. We
observe that from the relations (\ref{inter1}),(\ref{inter2}) and
(\ref{inter3}) it follows: $$\codim \mL^2 \dim C( \mL )\leq \g
(\alpha) \leq \dim \mD (0,1,-1) $$ $$\g (\alpha)\leq \f (2\alpha) $$
$$\codim \mL^2 \dim C( \mL )\leq \f(\alpha) . $$

From proposition \ref{tvr1} it follows immediately that for two Lie
algebras $\mL$ and $\widetilde{\mL}$ it holds:  $$\mL \cong
\widetilde{\mL} \Rightarrow \f = \widetilde{\f}\ \operatorname{and}
\ \g = \widetilde{\g}.$$

Note that sometimes in the literature, the name ''invariant
function'' denotes a (formal) Casimir invariant; its form however
depends on the choice of a basis of $\mL$. Here by invariant
functions we rather mean 'basis independent' complex functions, such
as $\f$ and $\g$. Since the purpose of the functions $\f$ a $\g$ is
to enlarge the set of 'classical' invariants which we list bellow,
this terminology is well justified.

Classical method of identification of an (indecomposable) Lie
algebra \cite{ide} boils down to computation of derived series
$D^{k+1}(\mL) = [D^k(\mL),D^k(\mL)],\ D^0(\mL)=\mL$, lower central
series $\mL^{k+1} = [\mL^k,\mL],\ \mL^1 = \mL$, and upper central
series $$C^{k+1}(\mL)/C^k(\mL)=C(\mL/C^k(\mL)),\, C^1 (\mL)=C (\mL).$$ The dimensions of
these ideals and the dimensions of the spaces of $(\alpha,\beta,\gamma)$--derivations are ''numerical'' invariants of the Lie algebra as well
as the
number of formal Casimir invariants $\tau(\mL)$ \cite{inv}. 
These characteristics, applied on radical, nilradical and factors of
{\it Levi decomposition} of $\mL$, also form invariants. We adopt
the following notation $$
\begin{array}{lll}

    \inv(\mL) & = & (\dim D^k(\mL) )\ (\dim \mL^k )\ (\dim C^k(\mL) )\
    \tau(\mL) \\
    && [\dim \mD(1,1,1) , \dim \mD(0,1,1) , \dim \mD(1,1,0) ,\\
    &&  \dim(\mD(1,1,1) \cap \mD(0,1,1)), \dim \mD(1,1,-1),\ \dim
    \mD(0,1,-1)].
\end{array}
$$

\subsection{Application of the invariant function $\f$ to three--dimensional Lie algebras}
Since among three--dimensional Lie algebras infinite continuum
already appears, it is clear that the finite set $\inv(\mL)$ of
certain dimensions, though useful, can never completely characterize
Lie algebras of dimension higher or equal to 3. On the contrary, it
turns out that the invariant function $\f$ alone(!) forms a complete
set of invariant(s) for indecomposable 3--dimensional Lie algebras,
as shows Table 1. We use the notation for 3--dimensional Lie
algebras as in \cite{ccc}. We point out the case of $A_{3,5}(a)$,
where the function $\f$ is different for different values of the
parameter $a\in \CC$, $0<|a|<1$ or $|a|=1,$ Im$\,a>0$ and thus
distinguishes among non--isomorphic algebras in this continuum. One
may also find convenient that once having some structure constants
of any indecomposable 3--dimensional Lie algebra, simple computation
of function $\f$ allows an unambiguous identification in the list
(see also Concluding remarks). We may also mention here that the
invariant function $\g$ has for $A_{3,1}$ a single value $\g(\alpha)
= 3$, and for the remaining algebras $A_{3,i},\,i=2,\dots , 8$ it
holds:
$$\g(\alpha) = \left\{
\begin{matrix}
    1, & \alpha = 1 \\
    0, & \alpha \neq 1 .
  \end{matrix} \right.
$$


\begin{table}[h!]
\vspace{-1pt} {\renewcommand{\arraystretch}{1.2}
\begin{tabular}[t]{|l|l|l|l||l|l|l|l|}
\hline
\parbox[l][40pt][c]{0pt}{} \textbf{$\mL$} & \textbf{Commutators} & \textbf{inv($\mL$)} & \multicolumn{5}{l}{\textbf{Function $\f$}}\vline\\

\hline\hline $A_{3,1}$ & $[e_2,e_3]= e_1$ & (310)(310)(13)\ 1 &    $\alpha$ & \multicolumn{4}{l}{} \vline\\
\cline{4-8}                &                  & [6, 6, 3, 5, 3, 4]&  $\f(\alpha)$ & 6 & \multicolumn{3}{l}{} \vline\\

\hline\hline $A_{3,2}$ & $[e_1,e_3] = e_1,$    &  (320)(32)(0)\ 1 &  $\alpha$ & 1 & \multicolumn{3}{l}{} \vline\\
\cline{4-8}           & $[e_2,e_3] = e_1+e_2$ &  [4, 3, 1, 2, 0, 1] & $\f(\alpha)$ & 4 & 3 & \multicolumn{2}{l}{} \vline\\

\hline\hline $A_{3,3}$ & $[e_1,e_3]= e_1,$ &   (320)(32)(0)\ 1 &  $\alpha$ & 1 & \multicolumn{3}{l}{} \vline \\
\cline{4-8}           & $[e_2,e_3]= e_2$  &  [6, 3, 1, 2, 0, 1] & $\f(\alpha)$ & 6 & 3 & \multicolumn{2}{l}{} \vline\\

\hline\hline $A_{3,4}$ & $[e_1,e_3]= e_1,$ &  (320)(32)(0)\ 1 &  $\alpha$ & -1 & 1 & \multicolumn{2}{l}{} \vline\\
\cline{4-8}           & $[e_2,e_3]= -e_2$ &[4, 3, 1, 2, 0, 1] &  $\f(\alpha)$ & 5  & 4 & 3 & \\

\hline\hline $A_{3,5}(a)$ & $[e_1,e_3]= e_1$,\  &  (320)(32)(0)\ 1 &  $\alpha$ & 1 & $a$ & $\frac{1}{a}$ & \\
\cline{4-8}              & $[e_2,e_3]= ae_2$,\ $\begin{smallmatrix} 0<|a|<1\\  \text{or}\  |a|=1,\,\text{Im}\,a>0 \\\end{smallmatrix} $ & [4, 3, 1, 2, 0, 1] & $\f(\alpha)$ & 4 & 4 & 4 & 3\\

\hline\hline $A_{3,8}$ & $[e_1,e_3]= -2e_2,$ & (3)(3)(0)\ 1 & $\alpha$ & -1 & 1 & 2 &\\
\cline{4-8}           & $[e_1,e_2]= e_1,\ [e_2,e_3] = e_3$ & [3,
0,
1, 0, 0, 1] & $\f(\alpha)$ & 5  & 3 & 1 & 0\\
\hline
\end{tabular}
}
\medskip

\centering \caption[l]{\it Indecomposable three--dimensional complex
Lie algebras and their invariant function $\f$. Blank space in
smaller table of function $\f$ denotes general complex number,
different from all previously listed values, e.g. for $A_{3,8}$ it
holds: $\f(\alpha)=0$, $\alpha\neq -1,1,2$. }
\vspace{-8pt}\label{tab1}
\end{table}


\subsection{Application of the function $\f$ to some eight--dimensional Lie
algebras}\label{higher}

\begin{example}Let us introduce two eight--dimensional complex Lie algebras
$\mL$, $\widetilde{\mL}$ by listing their commutation relations in
the basis $\{ e_1,\dots, e_8\}$:  $$
\begin{array}{ll}
\mL \quad & [e_1,e_3]=e_5,\ [e_1,e_4]=e_8,\ [e_1,e_5]=e_7,\
[e_1,e_6]=e_4,\\
            & [e_2,e_3]=e_7,\ [e_3,e_5]=e_8,\ [e_4,e_6]=e_7 \\
 \widetilde{\mL} \quad & [e_1,e_3]=e_5,\ [e_1,e_4]=e_8,\
[e_1,e_6]=e_4,\
[e_2,e_3]=e_7,\\
            & [e_2,e_6]=e_8,\ [e_3,e_5]=e_8,\ [e_4,e_6]=e_7 \\
\end{array}
$$ These algebras are both indecomposable and nilpotent. They are
both the result of a contraction of $sl (3, \CC)$ and form so called
{\it continuous graded contractions} corresponding to the Pauli
grading of $sl (3, \CC)$\cite{PZ1,HN2}. They appear on the list in
\cite{HN2} named as $\mL_{17,9}$ and $\mL_{17,12}$. Computing their
invariants we obtain

$$
\begin{array}{llll}
\inv(\mL) \qquad & (8,4,0) (8,4,2,0)(2,5,8) & 2 & [16,19,9,11,8,17] \\
\inv(\widetilde{\mL})  & (8,4,0)(8,4,2,0)(2,5,8) & 2 & [16,19,9,11,8,17]. \\
\end{array}
$$ Here we observe that a unique characterization is still not
attained. On the contrary, computing invariant functions $\g,\f$
and $\widetilde{\g},\widetilde{\f}$ for algebras $\mL$ and
$\widetilde{\mL}$ yield:

\begin{center}
\begin{tabular}[t]{|l||c|c|c|}
\hline \parbox[l][20pt][c]{0pt}{}   $\alpha$ & 0 & -2 & \\ \hline
\parbox[l][25pt][c]{0pt}{} $\f(\alpha)$ & 19 & 17 & 16\\ \hline
\end{tabular}\qquad
\begin{tabular}[t]{|l||c|c|c|}
\hline  \parbox[l][20pt][c]{0pt}{}  $\alpha$ & 0 & 1 & \\ \hline
\parbox[l][20pt][c]{0pt}{} $\g(\alpha)$ & 16  & 9 & 8\\ \hline
\parbox[l][25pt][c]{0pt}{} $\widetilde{\g}(\alpha)$ & 16  & 9 & 8\\
\hline
\end{tabular}\qquad
\begin{tabular}[t]{|l||c|c|c|}
\hline \parbox[l][20pt][c]{0pt}{}   $\alpha$             & 0  &
$-\frac{1}{2}$ & \\ \hline \parbox[l][25pt][c]{0pt}{}
$\widetilde{\f}(\alpha)$ & 19 & 17            & 16\\ \hline
\end{tabular}
\end{center}
Since $\f \neq \widetilde{\f}$, we conclude that $\mL \ncong
\widetilde{\mL} $.
\end{example}

\begin{example}
We present Lie brackets for indecomposable eight-dimensional
nilpotent one--parametric Lie algebra: $$
\begin{array}{ll}
\mL(a) \quad & [e_1,e_3]=e_5,\ [e_1,e_4]=-ae_8,\ [e_2,e_3]=e_7,\
[e_2,e_4]=e_6,\\
            & [e_3,e_5]=e_8,\ [e_3,e_7]=e_6,\ 0 \leq |a| \leq 1 \\

\end{array}
$$ This continuum appeared as $\mL_{18,25}(a)$  in
\cite{HN2}, however, the relations among its algebras remained
unresolved there. We achieve partial characterization by isolating
two of its points, $a=0,-1$, and thus obtain $$
\begin{array}{llll}
 \inv(\mL(0)) \qquad & (8,4,0)
(8,4,2,0)(2,5,8) & 4 & [21,23,10,14,9,18] \\ \inv(\mL(-1)) \qquad
& (8,4,0) (8,4,2,0)(2,5,8) & 4 & [22,22,10,13,9,18] \\
\inv(\mL(a))\quad a\neq 0,-1 \qquad & (8,4,0) (8,4,2,0)(2,5,8) & 4
& [20,22,10,13,9,18] \\
\end{array}
$$ We summarize the tables of invariant functions $\g_a$ and $\f_a$ of
$\mL(a)$ as follows:
\begin{center}
\begin{tabular}[t]{|l||c|c|c|}
\hline \parbox[l][20pt][c]{0pt}{}   $\alpha$  & 0  & 1 & \\ \hline
\parbox[l][20pt][c]{0pt}{} $\g_a(\alpha)$ & 16 & 10 & 9\\ \hline
\end{tabular}
\vspace{6pt}

\begin{tabular}[t]{|l||c|c|}
\multicolumn{3}{l}{$a = 0$} \\ \hline \parbox[l][20pt][c]{0pt}{}
$\alpha$  & 0  &  \\ \hline \parbox[l][20pt][c]{0pt}{}
$\f_0(\alpha)$ & 23 & 21 \\ \hline
\end{tabular}\quad
\begin{tabular}[t]{|l||c|c|c|c|}
\multicolumn{5}{l}{$a=1$} \\ \hline \parbox[l][20pt][c]{0pt}{}
$\alpha$  & 0  & -1 & 1 &\\ \hline \parbox[l][20pt][c]{0pt}{}
$\f_1(\alpha)$ & 22 & 21 & 20 & 19\\ \hline
\end{tabular}\quad
\begin{tabular}[t]{|l||c|c|c|}
\multicolumn{4}{l}{$a=-1$} \\ \hline \parbox[l][20pt][c]{0pt}{}
$\alpha$  & 0  & 1 & \\ \hline \parbox[l][20pt][c]{0pt}{}
$\f_{-1}(\alpha)$ & 22 & 22 & 19\\ \hline
\end{tabular}

\vspace{6pt}

\begin{tabular}[t]{|l||c|c|c|c|c|}
\multicolumn{6}{l}{$a \neq 0, \pm 1$} \\ \hline
\parbox[l][20pt][c]{0pt}{}   $\alpha$  & 0  & 1 & $-a$ &
$-\frac{1}{a}$ &\\ \hline \parbox[l][20pt][c]{0pt}{}
$\f_a(\alpha)$ & 22 & 20 & 20 & 20 & 19\\ \hline
\end{tabular}
\end{center}
Finally, the invariant function $\f_a$ provides us with complete
characterization of the continuum; or more precisely: since for $a,
b \neq 0,\pm1, \ a\neq b,\ a \neq \frac{1}{b}$ it holds
 $$\f_a(-a)= 20 \neq \f_b(-a) = 19,$$
the relation $\mL(a) \ncong \mL(b)$ is thus guaranteed.
\end{example}

\begin{example}Similarly to the previous example, we list commutators of
indecomposable eight-dimensional nilpotent one--parametric Lie
algebra: $$
\begin{array}{ll}
\mL(a) \quad & [e_1,e_3]=ae_5,\ [e_1,e_4]=e_8,\ [e_1,e_6]=e_4,\
[e_2,e_3]=e_7,\\
  & [e_3,e_5]=e_8,\ [e_3,e_6]=e_2,\ [e_4,e_6]=e_7,\ 0 < |a| \leq 1 \\
\end{array}
$$ This class of Lie algebras also appeared named as
$\mL_{17,13}(a)$ in \cite{HN2} and the relations among its algebras
remained also unresolved. Isolating one of its points, $a=-1$ we
obtain $$
\begin{array}{llll}
\inv(\mL(-1)) \qquad & (8,5,0) (8,5,2,0)(2,5,8) & 4 &
[19,19,8,9,7,18] \\ \inv(\mL(a))\,\,\,\,a\neq -1 \qquad & (8,5,0) (8,5,2,0)(2,5,8)
& 4 & [17,19,8,9,7,18] \\
\end{array}
$$
Listing the invariant functions $\g_a$ and $\f_a$ for $\mL(a)$

\begin{center}
\begin{tabular}[t]{|l||c|c|c|}
\hline \parbox[l][20pt][c]{0pt}{}   $\alpha$  & 0  & 1 & \\ \hline
\parbox[l][20pt][c]{0pt}{} $\g_a(\alpha)$ & 16 & 8 & 7\\ \hline
\end{tabular}

\vspace{6pt}

\begin{tabular}[t]{|l||c|c|c|c|}
\multicolumn{5}{l}{$a=1$} \\ \hline \parbox[l][20pt][c]{0pt}{}
$\alpha$  & -1  & 0 & 1 &\\ \hline \parbox[l][20pt][c]{0pt}{}
$\f_1(\alpha)$ & 19 & 19 & 17 & 16\\ \hline
\end{tabular}\qquad
\begin{tabular}[t]{|l||c|c|c|c|}
\multicolumn{5}{l}{$a=-1$} \\ \hline \parbox[l][20pt][c]{0pt}{}
$\alpha$  & 0 & 1 & -1 &\\ \hline \parbox[l][20pt][c]{0pt}{}
$\f_{-1}(\alpha)$ & 19 & 19 & 17 & 16\\ \hline
\end{tabular}

\vspace{18pt}

\begin{tabular}[t]{|l||c|c|c|c|c|c|}
\multicolumn{7}{l}{$a \neq \pm1$} \\ \hline
\parbox[l][20pt][c]{0pt}{}   $\alpha$  & 0 & 1 & $-1$ & $-a$ &
$-\frac{1}{a}$ & \\ \hline \parbox[l][20pt][c]{0pt}{}
$\f_a(\alpha)$ & 19 & 17 & 17 & 17 & 17 & 16\\ \hline
\end{tabular}
\end{center}
enables us to conclude that the function $\f_a$ again represents a
priceless instrument providing a complete description of presented
parametric continuum of Lie algebras.

\end{example}

\section{Concluding remarks}
\begin{itemize}
    \item There are several non-equivalent ways of generalizing the notion of derivation of Lie algebra.
     For example in \cite{Hartwig}, a linear operator $A \in \End(\mL)$ is
    called a $(\sigma,\tau)$--derivation of $\mL$ if
    for some $\sigma, \tau \in \End(\mL)$ and all $x,y \in \mL$ $$ A[x,y] = [Ax,\tau y] +
    [\sigma x,Ay].$$ This generalization for $\sigma,\tau$
    homomorphisms appears already in \cite{Jacobson}. If there exists $ B\in \der(\mL)$ such that for
    all $x,y \in \mL$ the condition $ A[x,y] = [Ax,y] +
    [x,By]$ holds, then the operator $A$ forms another generalization
    \cite{Bresar}. More general definition emerged in
    \cite{Leger} and runs as follows: $A \in \End(\mL)$ is called {\it generalized derivation} of $\mL$
    if there exist $B,C\in \End(\mL)$ such that for all $x,y \in \mL$ the property $ C[x,y] = [Ax,y] +
    [x,By]$ holds.
    \item The sets $\mD(1,1,0)$ and $\mD(0,1,-1)$ are called
    {\it centroid} and {\it quasicentroid} respectively in
    \cite{Leger}. The inquiry under which conditions these sets
    coincide has also been discussed.
    \item Jordan algebra $\mD(1,1,-1)$ together with Lie algebras
    $\mD(0,1,1)$, $
    \mD(1,0,0) \cap \mD(0,1,1)$ still deserve
    further study concerning their structure and mutual relations.
    \item  Having a matrix $A=(A_{ij})$ and structure
    constants $c^k_{ij}$ of $\mL$ in some basis, then in order to
    $A\in\mD(\alpha,\beta,\gamma)$ one has to solve the system of
    linear homogeneous equations $$\sum_m( \alpha c_{ij}^m A_{mk} +
    \beta c_{mj}^k A_{mi} + \gamma c_{im}^k A_{mj} ) = 0, \quad
    i,j,k=1,2,\ldots,\dim \mL.$$
    This shows how computation of spaces
    $\mD(\alpha,\beta,\gamma)$ (in fact, due to Theorem~\ref{klass} one parameter is sufficient) and consequently the function $\f$ is comparatively very easy.
    This computation is also viable in higher dimensions --- as presented in Sect. \ref{higher}.
    \item Compared to the extensive usefulness of
    the function $\f$, we haven't found much use for the invariant function
    $\g$; the form of the function $\g$ is, however, non--trivial
    and its general behaviour represents an open problem.
    \item Notion of $(\alpha,\beta,\gamma)$--derivations and a theorem similar to Theorem~\ref{klass} can be derived for general commutative or
    anti--commutative algebra.
    \item  As expected, though function $\f$ forms a complete invariant in dimension three, this is no
    longer true in dimension four (but it still works nicely there), let alone in higher dimensions.
    It is likely that some characteristics similar to presented
    invariant functions, more general perhaps, could complete the scenery of
    invariants. One can only encourage such pursuit as it seems to be the right way to go.

\end{itemize}

\section*{Acknowledgements}
The authors are grateful to J.~Tolar for numerous stimulating and searching discussions. Partial support by the Ministry of Education of Czech Republic (projects MSM6840770039 and LC06002) is gratefully acknowledged. 


\end{document}